\newcommand{\mtr}[1]{\mathrm{#1}}
\newcommand{\inter}{\text{int\,}}
\newcommand{\wl}{\widetilde{\lambda}}
\def\Z{{\mathbbm Z}}
\newtheorem{lemma}{Lemma}
\newcommand{\bea}{\begin{eqnarray}}
\newcommand{\eea}{\end{eqnarray}}
\newcommand{\beq}{\begin{equation}}
\newcommand{\eeq}{\end{equation}}
\newcommand{\Fm}{F_{\text{max}}}
\newcommand{\vp}{\varphi}
\newcommand{\bvp}{\bar{\varphi}}
\newcommand{\cG}{\mathcal{G}}
\newcommand{\les}{\leqslant}
\newcommand{\ges}{\geqslant}
\newcommand{\set}[1]{\lb #1\rb}
\def\lb{\left \{}
  \def\rb{\right \}}
\newcommand{\rc}{\mathrm{r}}
\newcommand{\si}{\sigma}
\newcommand{\wsi}{\widetilde{\sigma}}
\begin{document}
\mdseries
\upshape
\begin{center}

{\LARGE\bf Correlation functions of just renormalizable tensorial group field theory:
 The melonic approximation }
\vspace{15pt}

{\large   Dine Ousmane Samary$^{1,4}$,    Carlos I. P\'erez-S\'anchez$^{2}$\\ Fabien Vignes-Tourneret$^3$ 
and Raimar Wulkenhaar$^{2}$}

\vspace{15pt}

{\sl 1)\,\,Perimeter Institute for Theoretical Physics\\
31 Caroline St. N.
Waterloo, ON N2L 2Y5, Canada\\

\vspace{0.5cm}
2)\,\, Mathematisches Institut der Westf\"alischen Wilhelms-Universit\"at\\
Einsteinstra\ss{}e 62, D-48149 M\"unster, Germany

\vspace{0.5cm}
3)\,\, Institut Camille Jordan, Universit\'e de Lyon, CNRS UMR 5208, \\
Villeurbanne Cedex, France
\vspace{0.5cm}

4)\,\, Facult\'e des Sciences et Techniques/  ICMPA-UNESCO Chair,\\ Universit\'e d'Abomey-Calavi, 072 BP 50, Benin\\
}
\vspace{5pt}
E-mails:  {\sl dsamary@perimeterinstitute.ca\\
c$\_$pere03@uni-muenster.de\\
 vignes@math.univ-lyon1.fr\\
raimar@math.uni-muenster.de}

\vspace{10pt}

\begin{abstract}
  The $D$-colored version of tensor models has been shown to admit a
  large $N$-limit expansion. The leading contributions result from
  so-called melonic graphs which are dual to the $D$-sphere. This is a
  note about the Schwinger-Dyson equations of the tensorial
  $\vp^{4}_{5}$-model (with propagator $1/{\bf p}^{2}$) and their
  melonic approximation. We derive the master equations for two- and
  four-point correlation functions and discuss their solution. 
\end{abstract}

\end{center}

\noindent  Pacs numbers:  11.10.Gh, 04.60.-m
\\
\noindent Key words: Renormalization, tensorial group field theory, $1/N$-expansion,
Ward-Takahashi identities, Schwinger-Dyson equation, two-point
correlation functions.

\setcounter{footnote}{0}



\section{Introduction} 
The construction of a consistent quantum theory of gravity
is one of the biggest open problems of fundamental
physics. There are several approaches to this challenging issue.
Tensor models belong to the promising candidates to understand
quantum gravity (QG) in dimension $D\geq 3$
\cite{Rivasseau:2014ima}-\cite{Rivasseau:2011hm}. Tensor models come
from group field theory, which is a second-quantization of
the loop quantum gravity, spin foam and certainly from matrix models \cite{Oriti:2009wn}. 
Tensorial group
field theory (TGFT) is quantum field theory (QFT) over group manifolds. 
It can also be viewed as a new proposal for quantum field theories
based on a Feynman path integral, which generates random graphs describing
simplicial pseudo manifolds.  

A few years ago, Razvan Gur\u{a}u \cite{Gurau:2011xq}-\cite{Gurau:2013pca}
achieved a breakthrough for this program by discovering the
generalization of t'Hooft's $1/N$-expansion \cite{Di
  Francesco:1993nw}-\cite{'tHooft:1982tz}. This allows to understand
statistical physics properties such as continuum limit, phase
transitions and critical exponents (see
\cite{Gurau:2012ix}-\cite{Gurau:2011tj} for more detail).

All field theories must be physically justified by
renormalizability. In the case of tensor models, by modifying the
propagator using radiative corrections of the form $1/{\bf p}^2$
\cite{Geloun:2011cy}, this question has been solved under specific
prescriptions \cite{BenGeloun:2011rc}-\cite{Carrozza:2013wda}. The
$\beta$-functions of such models are also derived. It has been shown
that asymptotic freedom is the generic feature of all TGFT models
\cite{BenGeloun:2012pu} and
\cite{BenGeloun:2012yk}-\cite{Carrozza:2014rba}.

Recently, important progress was made in the case of independent
identically distributed (iid) tensor models. The correlation functions
are solved analytically in the large $N$-limit, in which the dominant
graphs are called ``{\it melon}'' \cite{Bonzom:2011zz}.  This model
corresponds to dynamical triangulations in three and higher
dimensions.  The susceptibility exponent  is computed and
the model is reminiscent of certain models of branched polymers
\cite{Benedetti:2011nn}. In the continuum limit, the models exhibit two
phase transitions. 
Despite all these aesthetic results, the critical behavior of the
large-$N$ limit of the renormalizable models ({\it the melonic
  approximation}) is not yet explored. The phase transitions must be
computed explicitly. This glimpse needs to be taking into account for
the future development of the renormalizable TGFT program.

This paper extends previous work on Schwinger-Dyson equations for
matrix and tensor models. The original motivation for this method was
the construction of the $\phi^4_4$-model on noncommutative Moyal
space.  The model is perturbatively renormalizable
\cite{Grosse:2004yu}-\cite{Rivasseau:2005bh} and asymptotically safe
in the UV regime \cite{Grosse:2004by}-\cite{Disertori:2006nq}. The key
step of the asymptotic safety proof \cite{Disertori:2006nq} was
extended in \cite{Grosse:2009pa} to obtain a closed equation for the
two-point function of the model. This equation was reduced in
\cite{Grosse:2012uv} to a fixed point problem for which existence of a
solution was proved. All higher correlation functions were expressed
in terms of the fixed point solution. In \cite{Grosse:2014lxa} the
fixed point problem was numerically studied. This gave evidence for 
phase transitions and for reflection positivity of the 
Schwinger two-point function.

The noncommutative $\phi^4_4$-model solved in \cite{Grosse:2012uv} can
be viewed as the quartic cousin of the Kontsevich model which is
relevant for two-dimensional quantum gravity. This leads immediately
to the question to extend the techniques of 
 \cite{Grosse:2009pa, Grosse:2012uv} to tensor models of rank $D\geq 3$.
In \cite{Samary:2014tja} one of us addressed the closed equation for
 correlation functions of rank 3 and 4 just renormalizable TGFT. The
 two-point functions are given perturbatively using the iteration
 method. The main challenge in this new direction is to perform the
 combinatorics of Feynman graphs and to solve the nontrivial integral
 equations of the correlators.  The nonperturbative study of all
 correlation functions need to be investigated carefully.  

 In this paper we push further this program.  For this, we consider
 the just renormalizable tensor model of the form $\vp^4_5$ without
 gauge condition, whose dynamics is described by the propagator of
 the form $1/{\bf p}^2$.  In the melonic approximation, the
 Schwinger-Dyson equations are given. The closed equation of the
 two-point and four-point functions are derived and its solution discussed.

 The paper is organized as follows. In section \ref{sec2}, proceeding
 from the definition of the model and its symmetries, we give the
 Ward-Takahashi identities which result from these symmetries. In
 section \ref{sec3} we find the melonic approximation of the
 Schwinger-Dyson equation. Section \ref{sec4} investigates the closed
 equation for two- and four-point functions. 
 Section \ref{sec5} is devoted to the closed equation of the four-point correlation functions. 
In section \ref{sec6} we
 solve the equation obtained.  In Section
 \ref{sec7} we give the conclusions, open questions and future work.

\noindent
\section{The Models}\label{sec2}
The model we will be mainly considering here is a tensorial
$\phi^{4}$-theory on $\text{U}(1)^{\times 5}$. Namely,
\begin{eqnarray}\label{action}
    S[\bvp, \vp]&=&\int_{\text{U}(1)^{5}} d{\bf g}
\,\bar \vp({\bf g})(-\Delta+m^{2})\vp({\bf g})\cr
  &+&
\frac{\lambda}{2}\sum_{c=1}^{5}\int_{\text{U}(1)^{20}}
d{\bf g} \,d{\bf g}' \,d{\bf h} \,d{\bf h}' \;
\bar \vp({\bf g})\vp({\bf g'})\bar \vp({\bf h})
\vp({\bf h'})K_{c}({\bf g},{\bf g'},{\bf h},{\bf h'})\,,
\label{eq-actionphi4}
\end{eqnarray}
where $\Delta=\sum_{\ell=1}^{5}\Delta_{\ell}$ and $\Delta_{\ell}$ is
the Laplace-Beltrami operator on ${U}(1)$ acting on colour-$\ell$
indices \cite{Carrozza:2012uv}, bold variables stand for
$5$-dimensional variables (${\bf g}=({\rm g_{1},\dots,g_{5})}$), and $K_{c}$
identifies group variables according to a vertex of colour 
$c\in\{1,2,\dots,5\}$. Figure~\ref{f-vertex} shows the vertex of colour 1.
\begin{figure}[!htp]
  \centering
\subfloat[A vertex of colour $1$]{{\label{f-vertex1}}\includegraphics[scale=1]{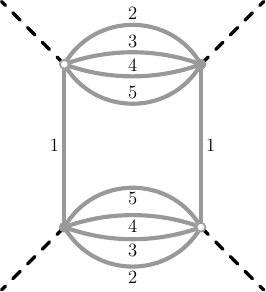}}
  \hspace{4cm}
 \subfloat [A generic vertex]{{\label{f-vertexlight}}\includegraphics[scale=1]{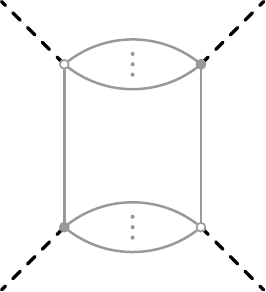}}
  \caption{Vertices}
  \label{f-vertex}
\end{figure}

The statistical physics
description of the model is encoded in the partition function:
\beq
\mathcal{Z}[\bar J,J]=\int\, D\vp D\bvp 
e^{-S[\bvp, \vp]+J\bvp+\vp \bar J}=e^{W[\bar J, J]},
\eeq
where $\bar J$ and $J$ represent the sources and $W[\bar J,J]$ is the
generating functional for the connected Green's functions. 
Then the $N$-point Green functions take the form
\bea
G_N({\bf g}_1,\cdots, {\bf g}_{2N})
=\frac{\partial \mathcal Z(\bar J, J)}{\partial J_1\partial \bar J_1
\cdots \partial J_N\partial \bar J_N}\Big|_{{\bf J}={\bf \bar J}=0}.
\eea
Now let $\vp_{class}$ denote the classical field defined by 
the expectation value of $\vp$ in the presence of sources $J$,$\bar J$: 
\beq
\vp_{class}=\langle \vp \rangle=\frac{\delta W[\bar J, J]}{\delta \bar J},
\quad  \bvp_{class}=\langle \bvp \rangle=\frac{\delta W[\bar J, J]}{\delta  J}.
\eeq
Then the 1PI effective action $\Gamma_{1PI}$ is given by 
the Legendre transform of $W[\bar J, J]$ as
\beq
\Gamma_{1PI}=-W[\bar J,J]+\int (J\bvp_{class}+\vp_{class} \bar J).
\eeq 
\noindent

The correlation functions can be computed perturbatively by expanding
the interaction part of the action
\eqref{eq-actionphi4}:
\begin{align}
\label{partition}
G_N({\bf g}_1,\cdots, {\bf g}_{2N}) \sim
\sum_{n=0}^\infty & \frac{(-\lambda)^n}{2^nn!}
\int\, d\mu_C\,
\bar{\varphi}({\bf g}_1) \cdots 
\varphi({\bf g}_{2N}) 
\\[-1ex]
&\times
\Big[ \sum_{c=1}^{5}\int_{\text{U}(1)^{20}}
d{\bf g} \,d{\bf g}' \,d{\bf h} \,d{\bf h}' \;
\bar \vp({\bf g})\vp({\bf g'})\bar \vp({\bf h})
\vp({\bf h'})K_{c}({\bf g},{\bf g'},{\bf h},{\bf h'})\Big]^n
\nonumber
\end{align}
where $d\mu_C$ is the Gaussian measure with covariance $C$ i.e:
\bea
\int\, d\mu_C\, \bvp({\bf g})\vp({\bf g'})=C({\bf g}, {\bf g'}),\quad 
\int\, d\mu_C\, \vp({\bf g})\vp({\bf g'})
=\int\, d\mu_C\, \bvp({\bf g})\bvp({\bf g'})=0.
\eea

In all of this paper we consider the Fourier transform of the field
$\vp$ to `momentum space' and write $\vp(p_1,\cdots,
p_5)=\vp_{12345}=\vp_{{\bf p}}$, with ${\bf p}\in \mathbb{Z}^5$. We
define a unitary transformation of rank-$D$ tensor fields $\vp$,
$\bvp$ under the tensor product of $D$ fundamental representations of
the unitary group $\mathcal U_{\otimes}^{N_D}:=\otimes_{i=1}^D
U(N_i)$.  For $U^{(a)}\in U(N_a)$, $a=1,2,\cdots,D$, we define
\bea
\vp_{12\cdots D}\rightarrow [U^{(a)}\vp]_{12\cdots a\cdots D}
=\sum_{p'_a\in Z} U^{(a)}_{ p_ap'_a}\vp_{12\cdots  a'\cdots D},
\\
\bvp_{12\cdots D}\rightarrow [\bvp U^{\dag(a)}]_{12\cdots a\cdots
  D}=\sum_{p'_a\in Z} \bar U^{(a)}_{p_ap'_a}\bvp_{12\cdots a'\cdots
  D}.  
\eea 
Here, $p'_a$ or simply $a'$ is the momentum index at the
position $a$ in the expression $\vp_{12\cdots a'\cdots D}$.  For
$N_i=N$, we choose the interaction terms of \eqref{action} in such a way that
they are invariant under the transformation $U^{(a)}$, i.e.\  $
\delta^{(a)} S^{\inter}=0.  $ Note that the measure $d\vp d\bvp$ is
also invariant under $U^{(a)}$.  Let us consider now the infinitesimal
Hermitian operator corresponding to the generator of unitary group
$U(N_a)$, i.e.\  
\bea 
U_{pp'}^{(a)}=\delta_{pp'}^{(a)}+iB_{pp'}^{(a)}
+O(B^2),\quad
\bar{U}_{pp'}^{(a)}=\delta_{pp'}^{(a)}-i\bar{B}_{pp'}^{(a)}+O(\bar
B^2), 
\eea 
with $\bar{B}_{pp'}^{(a)}={B}_{p'p}^{(a)}$. Then the
variation of the partition function respect to $B$, i.e.  $
\frac{\delta \ln \mathcal Z}{\delta B}=0$ gives the Ward-Takahashi
identities which are written as 
\bea\label{Wid}
&&\sum_{p_2,\cdots,p_D}\big(C^{-1}_{m2\cdots D}-C^{-1}_{n2\cdots
  D}\big)\langle \vp_{[\alpha]}\bvp_{[\beta]}\vp_{n2\cdots D}\bvp_{
  m2\cdots D}\rangle_c =\delta_{m\alpha_1}\langle
\vp_{n\alpha_2\cdots\alpha_D}\bvp_{\beta_1\cdots\beta_D} \rangle_c\cr
&&\quad\quad\quad\quad\quad\quad
-\delta_{n\beta_1}\langle\bvp_{m\beta_2\cdots\beta_D}
\vp_{\alpha_1\cdots\alpha_D}\rangle_c, 
\eea 
where $C_{p_1\cdots p_D}$ denotes the 
 propagator. For more
detail concerning relation \eqref{Wid} see \cite{Samary:2014tja}. The
correlation functions with insertion of strands are denoted by
$G^{\mathrm{ins}}_{[mn]\cdots}=\vp_{[\alpha]}\bvp_{[\beta]}\vp_{n2\cdots
  D}\bvp_{ m2\cdots D}$. Then the relation \eqref{Wid} takes form as
\bea \label{inser}
\sum_{2,3,\cdots, D}\big(C^{-1}_{m2\cdots D}-C^{-1}_{n2\cdots
  D}\big)G^{\mathrm{ins}}_{[mn]\cdots}=G_{n\cdots}-G_{m\cdots}.  
\eea

The model \eqref{eq-actionphi4} is (just) renormalizable to all orders
of perturbation theory. See Refs
\cite{BenGeloun:2011rc}-\cite{Carrozza:2012uv} for more detail.

\section{Schwinger-Dyson equation in the melonic approximation}\label{sec3}

We start by writing the Schwinger-Dyson equations for the one-particle
irreducible $2$- and $4$-point functions of the model
\eqref{eq-actionphi4}. We use the following graphical conventions:
dashed lines symbolize amputated external legs, a black circle represents a
connected function whereas two concentric circles stand for a
one-particle irreducible function. Finally, in order to lighten
equations, we will use the generic vertex of \cref{f-vertexlight} to
mean the sum of the five different coloured interactions. Note that 
\eqref{eq-SD2pt} has been derived in  \cite{Samary:2014tja}.
\begin{align}
  \raisebox{-.5\height}{\includegraphics[scale=.5]{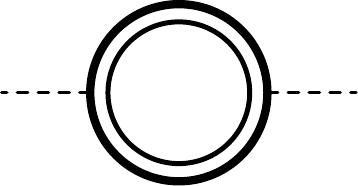}}\hspace{.5cm}
  =\Sigma_{\bf a}=& \hspace{.5cm}
  \raisebox{-.25\height}{\includegraphics[scale=.5]{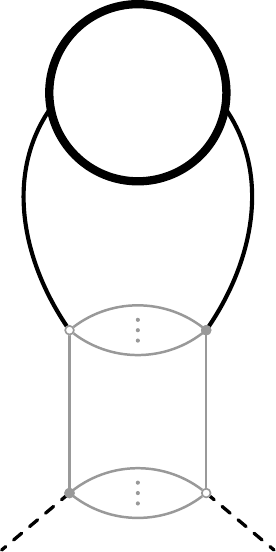}}\quad +\quad
  \raisebox{-.5\height}{\includegraphics[scale=.5]{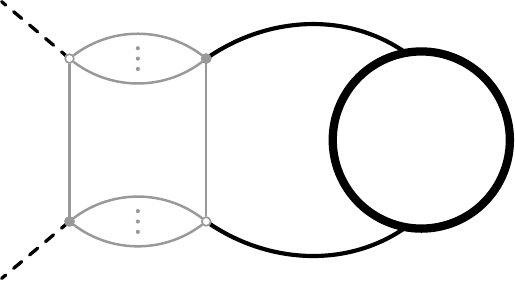}}\quad
  +\quad
  \raisebox{-.5\height}{\includegraphics[scale=.5]{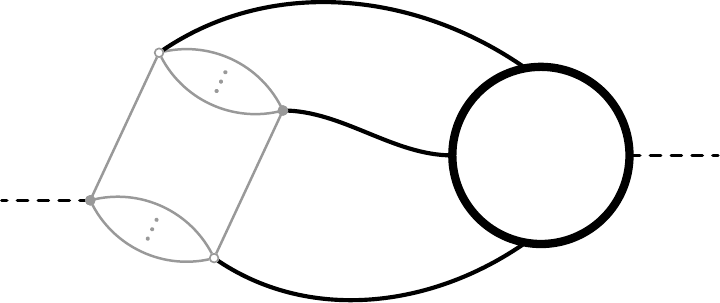}}\ .\label{eq-SD2pt}\\
  \nonumber\\
  \nonumber\\
  \raisebox{-.5\height}{\includegraphics[scale=.5]{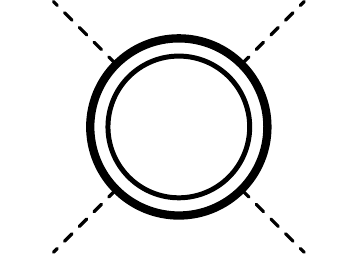}}\hspace{.5cm}
  =\Gamma_{\bf a}=& \hspace{.5cm}
  \raisebox{-.5\height}{\includegraphics[scale=.5]{schwinger-8}}\quad
  +\quad
  \raisebox{-.25\height}{\includegraphics[scale=.5]{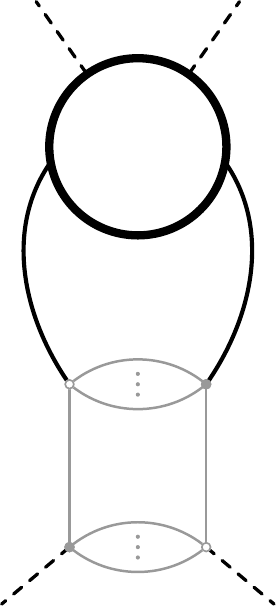}}\quad +\quad
  \raisebox{-.5\height}{\includegraphics[scale=.5]{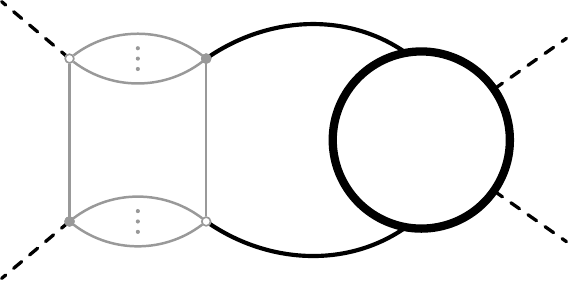}}\nonumber\\
  \nonumber\\
  &\hspace{6cm}+\quad
  \raisebox{-.5\height}{\includegraphics[scale=.5]{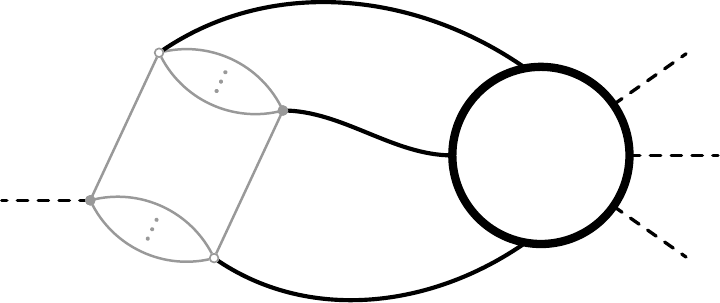}}\ .\label{eq-SD4pt}
\end{align}
We now want to restrict our attention to the melonic part of the $2$-
and $4$-point functions. Let $\cG$ be a $2$- or $4$-point Feynman
graph of model \eqref{eq-actionphi4}. Let denoted by $\omega(\cG)$ the degree of the tensor graph $\cG$ i.e:
\bea
\omega(\cG)=\sum_{\mbox{ $J$ jacket of $\cG$}} {\rm g}_J
\eea
where ${\rm g}_J$ is the genus of the jacket $J$. We impose 
$\omega(\cG)=0$. We will
prove that not all terms of \cref{eq-SD2pt,eq-SD4pt} contribute to
the melonic functions. A simple way of computing the degree $\omega$ of a
graph is to count its number $F$ of faces. Indeed, the two are related in
the following way (in dimension $5$, for a degree $4$ interaction)\cite{Samary:2014tja}:
\begin{equation}
  \label{eq-numberfaces}
  F=4V+4-2N-\frac{1}{12}(\widetilde\omega(\cG)-\omega(\partial\cG))-(C_{\partial\cG}-1)
\end{equation}
where  $V$ 
is the number of vertices of $\cG$, $N$ its number of
external legs, and $C_{\partial\cG}$ is the number of connected
components of its boundary graph $\partial\cG$ and 
$
\widetilde{\omega}(\cG)=\sum_{\widetilde J\subset \cG} {\rm g}_{\widetilde J}
$
with $\widetilde{J}$ the pinched jacket associated with a jacket $J$ of $\cG$. Recall that the Feynman graphs
here are so-called uncoloured graphs and, as a consequence, a face is a
cycle of colours $0i$, $i\in\set{1,2,\dots,5}$ \cite{Bonzom:2012hw}. A
detailed analysis of coloured graphs \cite{BenGeloun:2012pu,Samary:2014tja}  allows to prove that
$F(\cG)=\Fm(\cG)= 4V+4-2N$, if and only if   $\widetilde
\omega(\cG)=\omega(\partial\cG)=C_{\partial\cG}-1=0$. Moreover
$F\les\Fm$. We can thus prove the following

\begin{lemma}
  \label{thm-MelonicSD}
  The Schwinger-Dyson equations for the \emph{melonic} $2$- and $4$-point
  functions of model \eqref{eq-actionphi4} are ($\textup{m}$ stands for melonic):
  \begin{align}
  \raisebox{-.5\height}{\includegraphics[scale=.5]{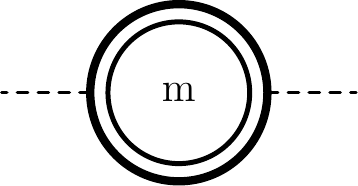}}\hspace{.5cm}
  =& \hspace{.5cm}
  \raisebox{-.25\height}{\includegraphics[scale=.5]{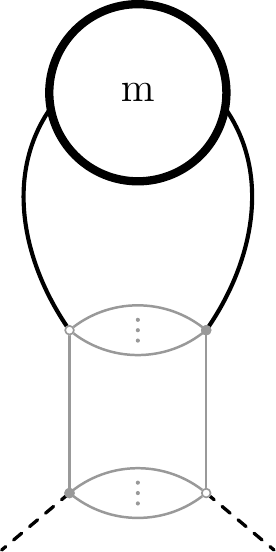}}\ ,\label{eq-SD2ptMelon}\\
  \nonumber\\
  \nonumber\\
  \raisebox{-.5\height}{\includegraphics[scale=.5]{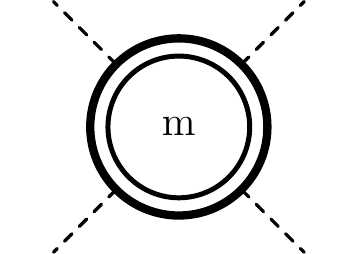}}\hspace{.5cm}
  =& \hspace{.5cm}
  \raisebox{-.5\height}{\includegraphics[scale=.5]{schwinger-8}}\quad
  +\quad
  \raisebox{-.25\height}{\includegraphics[scale=.5]{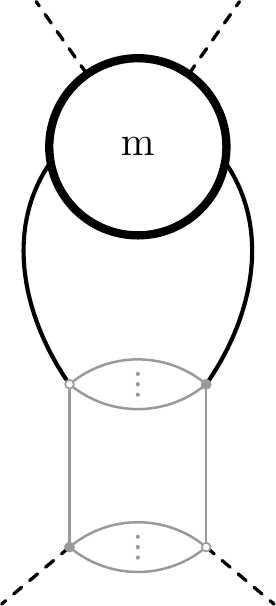}}\ .\label{eq-SD4ptMelon}
\end{align}
\end{lemma}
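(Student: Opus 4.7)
The plan is to show that each diagram on the right-hand side of \eqref{eq-SD2pt} and \eqref{eq-SD4pt} that does \emph{not} appear in \eqref{eq-SD2ptMelon} and \eqref{eq-SD4ptMelon} forces at least one of the three non-negative obstructions in the face-counting identity \eqref{eq-numberfaces}, namely $\widetilde\omega(\cG)$, $\omega(\partial\cG)$ or $C_{\partial\cG}-1$, to be strictly positive. Since melonicity is characterized by $F(\cG)=\Fm(\cG)$, this is equivalent to the simultaneous vanishing of all three obstructions, and any graph with a strictly positive obstruction falls out of the melonic sector.

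First I would classify each diagram on the RHS of \eqref{eq-SD2pt} and \eqref{eq-SD4pt} by its \emph{insertion pattern}: which legs of the new bare vertex are glued to which external legs of the inserted connected / 1PI function. Every diagram then represents a family of Feynman graphs with a fixed skeleton; by the standard property that $F$ and the $\omega$'s are additive under gluing along propagators, it suffices to analyze the skeleton itself, assuming the inserted subgraph is melonic. For the tadpole \includegraphics[scale=.4]{schwinger-2}, the two contracted legs close the four ``spectator'' colour strands into faces, keep the boundary graph connected with a single melonic bubble, and do not introduce any pinched-jacket genus; so this contribution stays melonic and yields \eqref{eq-SD2ptMelon}. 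The same check on diagram \includegraphics[scale=.4]{schwinger-8} in the $4$-point equation gives the bare vertex term, while the surviving radiative correction \includegraphics[scale=.4]{schwinger-13} inserts a melonic $2$-point subfunction on a single strand, which again preserves $F=\Fm$.

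Second, for each of the diagrams that should drop out, I would argue that the insertion pattern is topologically incompatible with melonicity. Concretely, diagrams \includegraphics[scale=.4]{schwinger-3} and \includegraphics[scale=.4]{schwinger-6} identify two legs of the new vertex with two legs of an inserted $(2N{+}2)$-point function in a way that splits the resulting boundary graph into two connected components, forcing $C_{\partial\cG}\ges 2$; hence $F<\Fm$. Diagrams \includegraphics[scale=.4]{schwinger-4} and \includegraphics[scale=.4]{schwinger-7} correspond to the ``crossed'' insertion in which two of the strands of the new vertex traverse the inserted subfunction in opposite order, which forces the pinched jackets through that subfunction to acquire strictly positive genus; this makes $\widetilde\omega(\cG)>0$ and again $F<\Fm$. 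An induction on the number of vertices (or, equivalently, on the number of melonic insertions already certified) then closes the loop: starting from melonic subfunctions, only the retained terms can yield melonic graphs, giving exactly \eqref{eq-SD2ptMelon} and \eqref{eq-SD4ptMelon}.

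The main obstacle I expect is the case-by-case topological book-keeping of the coloured strand structure in each insertion pattern, in particular verifying that the ``crossed'' insertions really do create a non-planar pinched jacket (rather than just re-routing strands without changing their genus). Carrying this out requires drawing the $D=5$ coloured strand structure explicitly through each diagram and tracking, for every jacket permutation $\tau$, how its cycle structure changes when the two vertex-legs are identified with external legs of the inserted function. This is purely combinatorial but it is where the asymmetry between the surviving and dropped terms actually lives.
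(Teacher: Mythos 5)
Your overall strategy --- replacing the paper's face count by a direct identification of which of the three obstructions $\widetilde\omega(\cG)$, $\omega(\partial\cG)$, $C_{\partial\cG}-1$ becomes strictly positive for each discarded diagram --- is legitimate in principle, since melonicity is indeed equivalent to the simultaneous vanishing of all three. But as written the proposal has both a concrete error and an essential step left undone. The error: for \emph{any} connected $2$-point graph of this model, strand conservation forces each of the five colour strands entering the external $\varphi$ to exit at the unique external $\bar\varphi$, so the boundary graph is always the $5$-coloured dipole on the two external fields; hence $C_{\partial\cG}=1$ and $\omega(\partial\cG)=0$ automatically for every term of \eqref{eq-SD2pt}. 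Your claim that the lying tadpole (second term of \eqref{eq-SD2pt}) is excluded because its boundary graph disconnects therefore cannot be correct; whatever kills that term must sit in $\widetilde\omega(\cG)$. The deferred step: the assertion that the ``crossed'' insertions force a pinched jacket of strictly positive genus is precisely the non-trivial content of the lemma in your approach, and you explicitly postpone its verification to a case-by-case strand analysis you have not carried out. Without it there is no proof yet.

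By contrast, the paper's argument never needs to decide which obstruction is responsible. It writes $F=F'+\delta F$ for the faces gained by attaching the new vertex, bounds $\delta F$ above by the number of faces of the new vertex that avoid its external legs ($\delta F\les 1$ for the lying tadpole, $\delta F\les 5$ for the last term of \eqref{eq-SD4pt}), uses $F'\les\Fm'$ for the insertion, and checks the single inequality $\Fm'+\delta F<\Fm$ for every discarded term. This uniform counting disposes of all cases at once, and as a by-product shows that in the surviving terms the insertion itself must be melonic and must carry the boundary structure of fig.~\ref{f-bdry4pt}. I would recommend either adopting that counting, or, if you wish to keep your finer classification, actually performing the jacket computation you describe and correcting the attribution of the obstruction for the $2$-point diagrams.
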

\begin{proof}
  The right-hand side of \cref{eq-SD2pt,eq-SD4pt} involve connected
  $2$-, $4$-, and $6$-point function insertions and a generic
  vertex. Let $\cG$ be a graph contributing to the left-hand side of 
\eqref{eq-SD2pt} or \eqref{eq-SD4pt} and let $F$ be its number of
faces. Let us study a term of the right-hand side of the equation
under consideration. The number of faces of a graph contributing to
its insertion is written $F'$. Clearly $F=F'+\delta F$ where $\delta
F\ges 0$. The additional internal faces are created by closing the
external faces of the insertion with the new edges connected to the
new vertex. As a consequence, $\delta F$ is bounded above by the
number of faces of the new vertex which do not contain its external
legs. Note also that $F\les\Fm'+\delta F$.

Let us now consider \cref{eq-SD2pt} and the lying tadpole of its
right-hand side (second term). In this case, $\delta F\les 1$. From
\cref{eq-numberfaces}, $\Fm'=4V'$ ($V'$ being the number of
vertices of the connected $2$-point insertion) and $F\les
4V'+1<4(V'+1)=\Fm$. Thus whatever the insertion, the graph $\cG$
cannot be melonic. The same type of argument holds for the other terms
but for the sake of clarity, let us repeat it for the last term of
\cref{eq-SD4pt}. Here $\delta F\les 5$ and $\Fm'=4V'-8$. Their sum
never reaches $\Fm=4V-4=4V'$.\\

The only terms which survive this analysis are the first one of
\cref{eq-SD2pt}, and the first and second ones of
\cref{eq-SD4pt}. Moreover it also proves that for a graph to be
melonic, the corresponding insertion needs to be melonic too. Note
that a melonic graph necessarily has a melonic boundary
\cite{Geloun:2012fq, Samary:2012bw}. Finally, such arguments
also fix the orientation, and the colour, of the boundary graph of the
$4$-point insertion in the second term on the right-hand side of
\ref{eq-SD4ptMelon}, see fig.\ \ref{f-bdry4pt} for a zoom into this term.
\end{proof}
\begin{figure}[!htp]
  \centering
  \includegraphics[scale=.8]{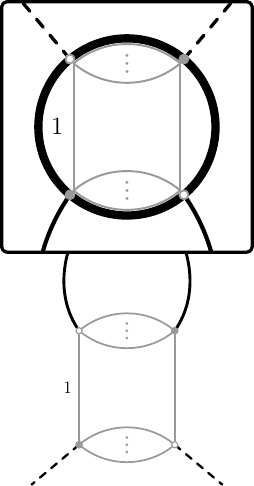}
  \caption{Boundary structure of a melonic $4$-point insertion}
  \label{f-bdry4pt}
\end{figure}

Note that the Schwinger-Dyson equation \eqref{eq-SD2ptMelon} and
\eqref{eq-SD4ptMelon} are easy to describe. Taking into account
\eqref{eq-SD2ptMelon} we do not need to write the Ward-Takahashi
identities before getting the closed equation of the two-point
functions.

\section{Two-point correlation functions}\label{sec4}

We now want to use the melonic approximation to obtain a closed equation
for the 1PI two-point function $\Sigma_{a_1,\ldots,a_5}$. 
For sake of simplicity write $\mathbf{a}=(a_1,\ldots,a_5)\in\mathbb{Z}^5$. 
Setting each constant $\lambda_\rho$  ($\rho=1,\ldots,5$) equal to the
bare coupling constant, $\lambda_\rho=\lambda$, we can express the 
1PI two-point function $\Sigma_{{\bf a}}$ in terms of the 
renormalized quantities by using the Taylor expansion 
\begin{eqnarray}
\label{one}
\Sigma_{{\bf a}}&=&\Sigma_{\bf 0}+|{\bf a}|^2\frac{\partial \Sigma_{{\bf a}}}{\partial |\mathbf{a}|^2}\Big|_{{\bf a}=\mathbf{0}}+
\Sigma_{{\bf a}}^{\mtr{r}}\cr
&=&(Z-1)|{\bf a}|^2+Zm^2-m_{\mtr{r}}^2+
\Sigma_{{\bf a}}^{\mtr{r}},
\end{eqnarray}
with
\begin{eqnarray}
m^2=\frac{m_{\mtr{r}}^2+\Sigma_{\bf 0}}{Z},\quad Z=1+\frac{\partial \Sigma_{\bf a}}{\partial |\mathbf{a}|^2}\Big|_{{\bf a}={\bf 0}}.
\end{eqnarray}

Moreover the following renormalization conditions 
\beq\label{con}
\Sigma^{\mtr{r}}_{{\bf 0}}=0,\quad \frac{\partial \Sigma^{\mtr{r}}_{{\bf a}}}{\partial a_\rho^2}\Big|_{{\bf a}={\bf 0}}=0
\eeq
hold. \\

The propagator $C$, given explicitly by $C^{-1}_{{\bf p}}=Z(|{\bf
  p}|^2+m^2)$, is related to the dressed propagator $G_{\mathbf a}$ by
means of the the Dyson relation $G^{-1}_{\bf a}=C^{-1}_{\bf
  a}-\Sigma_{\bf a}$. Then using the Schwinger-Dyson equations for
$\Sigma_{\bf a}$, given in \eqref{eq-SD2ptMelon}, we get
\begin{align}
\nonumber 
\Sigma_{{\bf a}}=-Z^2\lambda\sum_{{p_1,p_2,p_3,p_4}}^{\bf\Lambda}
&\Big[\frac{1}{C^{-1}_{a_1{p_1p_2p_3p_4}}-\Sigma_{a_1{p_1p_2p_3p_4}}}+
\frac{1}{C^{-1}_{p_1a_2p_2p_3p_4}-\Sigma_{p_1a_2p_2p_3p_4}} \\ \nonumber 
+&\,\,\frac{1}{C^{-1}_{p_1p_2a_3p_3p_4}-\Sigma_{p_1p_2a_3p_3p_4}}+
\frac{1}{C^{-1}_{p_1p_2p_3a_4p_4}-\Sigma_{p_1p_2p_3a_4p_4}} \\
\label{fab} +&\,\,\frac{1}{C^{-1}_{{p_1p_2p_3p_4}a_5}
-\Sigma_{{p_1p_2p_3p_4}a_5}}\Big] \,.
\end{align}
The sums are performed over the integers $p_i\in \mathbb Z$ with some cutoff 
${\bf\Lambda}$. For $\rho=1,\ldots,5$, 
let $\sigma_{\rho}$ be the action of $\mathfrak{S}_5$ which permutes 
the strands with momenta $\mathbf{p}$ as follows: 
\begin{align*}
 \sigma_1(p_1p_2p_3p_4p_5) & = ({p_2p_1p_3p_4}p_5), \\
 \sigma_2(p_1p_2p_3p_4p_5) & = ({ p_2 p_3p_1p_4p_5}), \\
                           & \,\,\, \vdots \\
 \sigma_4(p_1p_2p_3p_4p_5) &  = (p_2p_3p_4p_5p_1),
\end{align*}
and $\sigma_5$ trivially. Notice that the value of the propagator
$C_{\mathbf{p}}$ remains invariant under the action of all these
$\sigma_\rho$, $C_{\mathbf{\sigma_\rho(p)}}=C_{\mathbf{p}}$.  After
combining \eqref{one} and \eqref{fab} we can obtain, by using
\begin{equation}
C^{-1}_{a_1{p_1p_2p_3p_4}}-\Sigma_{a_1{p_1p_2p_3p_4}}
=a_1^2+\sum_{i=1}^4p_i^2+m^2_{\mtr r}-\Sigma_{a_1{p_1p_2p_3p_4}}^{\mtr r}, 
\label{prop.explicit}
\end{equation}
the expression

\begin{align}
 (Z-1)|{\bf a}|^2+Zm^2-m_{\mtr{r}}^2+ \Sigma_{\mathbf{a}}^{\mathrm{r}}
=-Z^2\lambda\sum_{\rho=1}^5\sum_{{p_1p_2p_3p_4}}^{\bf \Lambda}\frac{1}{(a_\rho^2+\sum_{i=1}^4p_i^2)+m_{\mtr r}^2
-\Sigma_{\sigma_\rho(a_\rho{p_1p_2p_3p_4})}^{\mtr{r}}}\label{closed}.
\end{align}
We now can evaluate at  $\mathbf{a}=\mathbf 0$ to get rid of the term $Zm^2-m_{\mtr{r}}^2$, 
which according to this equation is given by
\begin{align}
\label{varia}
Zm^2-m_{\mtr{r}}^2=-Z^2\lambda\sum_{{p_1p_2p_3p_4}}^{\bf \Lambda}\sum_{\rho=1}^5\frac{1}{\sum_{i=1}^4p_i^2+m_{\mtr r}^2-
\Sigma_{\sigma_\rho(0{p_1p_2p_3p_4})}^{\mtr{r}}}.
\end{align}
Replacing the expression \eqref{varia} in \eqref{closed}, we obtain 
\begin{align}
(Z-1)|{\bf a}|^2+\Sigma_{{\bf a}}^{\mtr{r}}
=-Z^2\lambda\sum_{{p_1p_2p_3p_4}}^{\bf \Lambda}\sum_{\rho=1}^5&
\Big[\frac{1}{a_\rho^2+|{\bf p}|^2+m_{\mtr r}^2-
\Sigma_{\sigma_\rho(a_\rho{p_1p_2p_3p_4})}^{\mtr{r}}}\cr
&-\frac{1}{|{\bf p}|^2+m_{\mtr r}^2-
\Sigma_{\sigma_\rho(0{p_1p_2p_3p_4})}^{\mtr{r}}}\Big].\label{eqcl}
\end{align}
Here we have defined $|{\bf p}|^2 := 
\sum_{i=1}^4p_i^2$, with some abuse of notation.
The evaluation at $\mathbf a=\sigma_\rho(a_\rho0000)$, namely 
\begin{align}
 \label{nnn}
(Z-1)a_\rho^2+\Sigma_{\sigma_\rho(a_\rho0000)}^{\mtr{r}}
=-Z^2\lambda\sum_{{\bf p}\in\mathbb{Z}^4}^{\bf \Lambda}
& \Big[\frac{1}{a_\rho^2+|{\bf p}|^2+m_{\mtr r}^2-
\Sigma_{\sigma_\rho(a_\rho p_1p_2p_3p_4)}^{\mtr{r}}} \\
& -\frac{1}{|{\bf p}|^2+m_{\mtr r}^2-
\Sigma_{\sigma_\rho(0 p_1p_2p_3p_4)}^{\mtr{r}}}\Big],
\nonumber
\end{align}
leads to 
a splitting of the renormalized 1PI two-point function as

\begin{equation}
\label{sym}
\Sigma_{a_1a_2a_3a_4a_5}^{\mtr{r}}=\sum_{\rho=1}^5\Sigma_{\sigma_\rho( a_\rho0000)}^{\mtr{r}}
\end{equation}
as a mere consequence of summing eq.\ \eqref{nnn} over $\rho=1,\ldots,
5$ and then comparing the rhs of the resulting equation with that of
eq.\ \eqref{eqcl}.
The wave function renormalization constant $Z$ can be obtained 
from differentiating \eqref{nnn} with respect to any $a_\rho^2$ 
and the subsequent evaluation at $a_\rho=0$:  
\beq\label{DineCarlos}
Z-1 =Z^2\lambda\Big[\sum_{{\bf p}\in\mathbb{Z}^4}^{\bf \Lambda}
\frac{1}{(|{\bf p}|^2+m_{\mtr r}^2-\Sigma_{0{\bf p}}^{\mtr r})^2}\Big],
\quad {\bf \Lambda}\in\mathbb{Z}^4.
\eeq
Here (\ref{con}) has been used. 
Insertion of this value for $(Z-1)$ into eq.\ \eqref{nnn} renders, 
setting $\tilde\lambda=Z^2\lambda$ and using \eqref{sym} again,
\beq\label{closeddine}
\Sigma_{a{\bf 0}}^{\mtr{r}}=
-\tilde\lambda\sum_{{\bf p}\in\mathbb{Z}^4}^{\bf \Lambda}
\Big[
\frac{1}{a^2+|{\bf p}|^2+m_{\mtr r}^2
-\Sigma_{a{\bf 0}}^{\mtr{r}}-\Sigma_{0{\bf p}}^{\mtr{r}}}
+\frac{a^2}{(|{\bf p}|^2+m_{\mtr r}^2-\Sigma_{0{\bf p}}^{\mtr r})^2}
-\frac{1}{|{\bf p}|^2+m_{\mtr r}^2-
\Sigma_{0{\bf p}}^{\mtr{r}}}\Big].
\eeq

The above equation could lead to a divergence in the limit where 
${\bf \Lambda}\rightarrow \infty$ which should compensate 
with a divergence of $\tilde{\lambda}^{-1}$. We will prove this
in sec. \ref{sec5}.

We now pass to a continuum limit in which the discrete momenta $a\in
\mathbb{Z},{\bf p}\in \mathbb{Z}^4$ become continuous. We do this here
in a formal manner. A rigorous treatment should first view the
regularized dual of $U(1)^5$ as a toroidal lattice $(\mathbb{Z}/2\Lambda
\mathbb{Z})^5$, then take an appropriate scaling limit to the 5-torus
$[-\Lambda,\Lambda]^5$ with periodic boundary conditions, and finally
$\Lambda\to \infty$. These steps should give for
(\ref{closeddine}):
\begin{align}
\label{climit}
\Sigma_{a{\bf 0}}^{\mtr{r}}&=
-\tilde\lambda\int_{\mathbb{R}^4} d{\bf p}\,\Big[\frac{a^2}{(|{\bf p}|^2+m_{\mtr r}^2-\Sigma_{0{\bf p}}^{\mtr r})^2}
+\frac{1}{a^2+|{\bf p}|^2+m_{\mtr r}^2-
\Sigma_{a{\bf 0}}^{\mtr{r}}-\Sigma_{0{\bf p}}^{\mtr{r}}}
-\frac{1}{|{\bf p}|^2+m_{\mtr r}^2-
\Sigma_{0{\bf p}}^{\mtr{r}}}\Big]
\end{align}
with $d{\bf p}=dp_1dp_2dp_3dp_4$. Because of 
 (\ref{sym}), i.e.\ $\Sigma_{0{\bf  p}}^{\mtr{r}}=\sum_{i=1}^4 
\Sigma_{p_i{\bf  0}}^{\mtr{r}}$, \eqref{climit} is a closed equation for
the function $\Sigma_{a{\bf 0}}^{\mtr{r}}$. 
Using Taylor's formula we can equivalently write this equation as
\beq
\label{climit-b}
\Sigma^{\mtr r}_{a{\bf 0}}
=
-\tilde\lambda
\int_0^{a^2} dt\,(a^2-t) 
\int_{\mathbb{R}^4} d{\bf p}\,
\frac{d^2}{dt^2}\Big(
\frac{1}{m_{\mtr r}^2+t -
\Sigma^{\mtr r}_{\sqrt{t}{\bf 0}}+ \sum_{i=1}^4 (p_i^2 - \Sigma^{\mtr r}_{p_i{\bf 0}})}\Big).
\eeq

The equation (\ref{climit-b}) is the analogue of
the fixed point equation \cite[eq.\ (4.48)]{Grosse:2012uv} for the
boundary 2-point function $G_{a0}$ of the quartic matrix model: In both
situations the decisive function satisfies a non-linear integral
equation for which we can at best expect an approximative numerical
solution. Finding a suitable method, implementing it in a computer
program and and running the computation needs time. We intend to 
report results in a future publication. At the moment we have 
to limit ourselves to a perturbative investigation of this equation, see sec. \ref{sec6}.

\section{Closed equation of the 1PI four-point functions}\label{sec5}

In this section we prove that the 
coupling constant $\wl$ is finite in the $UV$ regime. 
It will be convenient to briefly discuss first the index structure of 
the four-point function. 
$\Gamma^4$ has $10$ indices: Each external coloured line of 
$\varphi_\mathbf{a}$ and $\varphi_\mathbf{b}$
should be paired with one of the complex conjugate fields 
$\bar\varphi_\mathbf{c}$ and $\bar\varphi_\mathbf{d}$
in the vertex 
$\varphi_\mathbf{a}\bar\varphi_\mathbf{c}\varphi_\mathbf{b}\bar\varphi_\mathbf{d}$. 
That is to say that $\mathbf{c}$ and $\mathbf{d}$ are expressed\footnote{More precisely, 
$\mathbf{c}=(\pi_1\circ\varrho)(\mathbf{a},\mathbf{b})$ 
and $\mathbf{d}=(\pi_2\circ\varrho)(\mathbf{a},\mathbf{b})$ 
where, $(\mathbf{a,b})\in \Z^{10}$, 
$\pi_1$ and $\pi_2$ are the projections in the first or 
second factor of $\Z^5\oplus\Z^5$, and $\varrho$ is 
a permutation in $\mathfrak{S}_{10}$ that allows 
colour conservation.}
in terms of $(\mathbf{a,b})$. For instance,
for the vertex of colour 1, represented in fig. (\ref{f-vertex1}),
$\mathbf c=(a_5a_4a_3a_2b_1)$, and $\mathbf d=(b_5b_4b_3b_2a_1)$.
The external lines for that vertex look as follows:

\begin{center}
\includegraphics{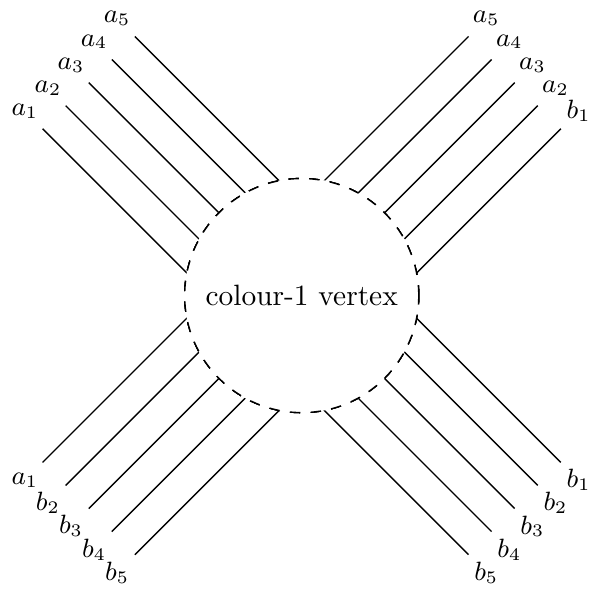}
\end{center}

We now excise the vertex in the rhs of the melonic approximation of
the Schwinger-Dyson equation for the four-point function 
and write its value, $-Z^2\lambda$, instead. The first graph 
in the rhs of eq. \eqref{eq-SD4ptMelon} is precisely the vertex.  
In the second graph, after removing the vertex, 
a jump in the colour $1$ occurs; this 
can be understood as an insertion, whose value we give now.
The removal of the colour-$1$ vertex in that graph leaves
the following graph, where the upper dotted lines have indices $\mathbf{a}=(a_1a_2a_3a_4a_5)$ and $\mathbf{c}=(a_5a_4a_3a_2b_1)$.
\begin{align}
G_{a_1a_2a_3a_4a_5}^{-1}G_{a_5a_4a_3a_2b_1}^{-1} G_{[a_1b_1]a_2a_3a_4a_5}^{\mathrm{ins}}\quad =& \quad 
  \raisebox{-.4\height}{\includegraphics{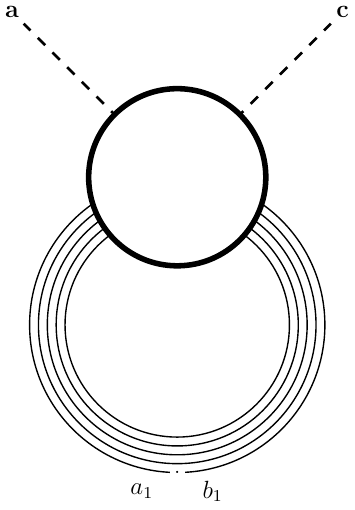}}\hspace{.5cm}.
\end{align}

According to \eqref{inser}, the value of that insertion is 
\[G_{[a_1b_1]a_2a_3a_4a_5}^{\mathrm{ins}}= \frac{1}{Z(a_1^2-b^2_1)}\left(G_{a_1a_2a_3a_4a_5}-G_{a_5a_4a_3a_2b_1} \right).\] 
In general any of the vertex in this model has a privileged colour
$i$ (i.e. the colour $i$ is with the neighbour vertically and the 
remaining colours are connected with the other neighbouring field, sidewards). 
The excised graph for the `colour $i$'-vertex has then the following value:
\[G_{a_1a_2a_3a_4a_5}^{-1} G_{a_5 \ldots b_1\hat{a}_i\ldots a_1}^{-1}G_{[a_ib_i]a_1\ldots \hat{a}_i \ldots a_5}^{\mathrm{ins}}= \frac{1}{Z(a_i^2-b^2_i)}\left[\frac{1}{G_{a_5\ldots b_i\hat{a}_i\ldots a_2a_1}}
-\frac{1}{G_{a_1a_2a_3a_4a_5}} \right], 
\] 
where $\hat a$ means omission of $\hat a_i$ (and this index is substituted by $b_i$) and, 
accordingly, $\mathbf{c}=(a_5\ldots b_i\hat{a}_i\ldots a_2a_1)$.
Then the full equation for $\Gamma^{4, \mathrm{ren}}_{a_1a_2a_3a_4a_5b_1b_2b_3b_4b_5}$ is given by the
sum over these two kinds of graphs over all the vertices of the model, to wit
\begin{align*}
\Gamma^{4, \mathrm{ren}}_{\mathbf{a,b}}
  = & \sum_{i=1}^5 -Z^2\lambda(1+G_{a_1a_2a_3a_4a_5}^{-1} G_{a_5 \ldots b_1\hat{a}_i\ldots a_1}^{-1}G_{[a_ib_i]a_1\ldots \hat{a}_i \ldots a_5}^{\mathrm{ins}})  \\
  = & -Z^2\lambda\left(5+\frac{1}{Z(a_1^2-b^2_1)}\left[\frac{1}{G_{a_5a_4a_3a_2b_1}}-\frac{1}{G_{a_1a_2a_3a_4a_5}} \right] \right. \\
  & + \frac{1}{Z(a_2^2-b^2_2)}\left[\frac{1}{G_{a_5a_4a_3b_2a_1}}-\frac{1}{G_{a_1a_2a_3a_4a_5}} \right]
  +\frac{1}{Z(a_3^2-b^2_3)}\left[\frac{1}{G_{a_5a_4b_3a_2a_1}}-\frac{1}{G_{a_1a_2a_3a_4a_5}} \right] \\
  & +\frac{1}{Z(a_4^2-b^2_4)}\left[\frac{1}{G_{a_5b_4a_3a_2a_1}}-\frac{1}{G_{a_1a_2a_3a_4a_5}} \right] 
  +  \left.  \frac{1}{Z(a_5^2-b^2_5)}\left[\frac{1}{G_{b_5a_4a_3a_2a_1}}-\frac{1}{G_{a_1a_2a_3a_4a_5}} \right]\right).
  \end{align*}
By inserting the value for $G_{\mathbf{q}}$ given by \eqref{prop.explicit}, 
and by imposing the renormalization conditions, taking the limit $\mathbf{a,b}\to 0$ one readily obtains
\bea\Gamma_{0}^{{4, \mathrm{ren}}}=-5\wl\Big(1+\frac{1}{Z}\Big).
\eea
By imposing the cutoff $\Lambda,$  we can show (perturbatively) that 
the wave function renormalization (for 
a similar computation cf. Lemma 5 in \cite{BenGeloun:2012pu}) takes the form 
\bea
Z=1 +x\lambda\log(\Lambda)+\mathcal{O}(\lambda^2),\quad x\in\mathbb{R}.
\eea Then one has
\bea-\lambda_r=\Gamma_{0}^{\mathrm{ren}}\rightarrow -5\wl.\eea

\section{Solution of the integral equation}\label{sec6}

The integral equation \eqref{climit-b} is a non-linear 
integro-partial differential equation. We therefore opt for a numerical
approach. We introduce the following dimensionless variables:
\[ \alpha \equiv \frac{a}{m_\rc},\qquad \tau\equiv \frac{t}{m^2_\rc}, \qquad  \boldsymbol{\rho} \equiv \frac{\mathbf p}{m_\rc}, \qquad \mbox{and} \quad 
\quad \gamma\equiv 1+\tau+\sum_{i=1}^4\rho_i^2,  \]
and, accordingly, we rescale the two-point function $\si(\alpha)\equiv \Sigma^\rc_{a0000}/m^2_\rc$.
Equation \eqref{climit-b} can be thus reworded:
\begin{align}
 \si(\alpha)=&-\wl\int d\boldsymbol{\rho}\int_0^{\alpha^2}d\tau(\alpha^2-\tau)
 \frac{\partial^2}{\partial \tau^2} \left\{\frac{1}{1+\tau+|\boldsymbol \rho|^2-
 \si(\sqrt{\tau},\boldsymbol \rho) } \right\}. \label{normIE}
\end{align}
 Expanding the solution in $\wl$, 
 $\si(\alpha)=\sum_{n=0}^\infty \si_n(\alpha) \wl^n$, it readily follows $\si_0(\alpha)=0$. To 
 proceed with the computation of the non-trivial orders, 
we invert the power series (in $\wl$) appearing in the 
denominator \eqref{normIE} after factoring out $\gamma$, namely  $(1-\sigma(\sqrt{\tau},\boldsymbol \rho)/\gamma)$.
First, we treat this series as a formal power series,
then we care about convergence. The idea is that in order to compute 
$\si_{n+1}$, for which we 
need $\si_{i}$, $i\leq n$, we approximate the latter functions
by near-to-`principal diagonal' Pad\'e approximants, i.e. by 
quotients of polynomials of almost equal degree; this approximation 
is valid in a certain domain and would lead to the convergence of the series
there. Shortly, a second advantage of the Pad\'e approximants will be evident.
\par	

We use the following result for the power of a series (cf.
sec. 3.5 in \cite{comtet:1974a}):  
For any $r\in \mathbb C$, the $r$-th power of a formal power
series $1+g_1t^1+\frac{1}{2!}g_2t^2+\ldots$ can be expanded
as follows:
\begin{align}
\left(1+\sum_{n\geq1} g_n\frac{t^n}{n!}\right)^r = & 
\,\, 1+ \sum_{n\geq1}\left( 
\mathbb P_n^{(r)}\frac{t^n}{n!}\right),
\end{align}
where the $\mathbb P_{n}^{(r)}$, the so-called \textit{potential polynomials}, are given in terms of the Bell polynomials
$\mathbb B_{p,q}$:
\begin{align*}
\mathbb P_n^{(r)}=&\sum_{1 \leq k\leq n} (r)_k \mathbb B_{n,k}(g_1,\ldots,g_{n-k+1}) 
\\
= & \sum_{1 \leq k\leq n} (-1)^k k! \mathbb B_{n,k}(g_1,\ldots,g_{n-k+1}).
\end{align*}

In our case, the Pochammer symbol appearing there, 
$(r)_k$, becomes $(-1)_k=(-1)^k k!$ 
As for the Bell polynomials, they are defined by
\[\mathbb B_{n,k}(x_1,\ldots,x_{n-k+1})=\sum_{c_j}\frac{n!}{c_1!c_2!\cdots c_{n-k+1}!} 
\left(\frac{x_1}{1!}\right)^{c_1}
\left(\frac{x_2}{2!}\right)^{c_2}\cdots \left(\frac{x_{n-k+1}}{{n-k+1}!}\right)^{c_{n-k+1}}.\]
The sum here runs over all the non-negative integers $c_l$ such that
the conditions 
\begin{equation}
 \sum_{i=1}^{n-k+1}c_i=k  \qquad \mbox{and}\qquad \sum_{q=1}^{n-k+1} q c_q=n 
\end{equation}
are fulfilled. It will be useful to rescale the $k$-th variable $x_k$ in 
the Bell polynomials by $x_k'= w (k!)x_k$, for a number $w\neq 0$, to
obtain a simpler expression in the lhs:

\begin{align}
\nonumber \mathbb B_{n,k}(x_1',\ldots,x_{n-k+1}')&=
\mathbb B_{n,k}\left(w (1!) x_1,w (2!)x_2,\ldots,w(n-k+1)!x_{n-k+1}\right)\\
&= w^k
\sum_{c_j}\frac{n!}{c_1!c_2!\cdots c_{n-k+1}!} 
x_1^{c_1}\cdots ( x_{n-k+1})^{c_{n-k+1}} \label{bell_simpl}.
\end{align}
\textbf{Remark.} After taking the reciprocal of the power series,
the convergence of each coefficient of $\wl^n$, $\si_n(\alpha)$, 
is not guaranteed. We denote by $\wsi_n(\alpha)$ those probably
divergent coefficients, which need to be renormalized.
Thus, taking the $(n+1)$-order in $\wl$ of $\wsi(\alpha)$, $\wsi_{n+1}(\alpha)$, 
boils down to integrate
\begin{align}\label{sattt}
\wsi_{n+1}(\alpha)=&-\int_{\mathbb R^4} d\boldsymbol\rho \int _0^{\alpha^2}d\tau(\alpha^2-\tau) 
\frac{\partial^2}{\partial \tau^2} \cr
&\qquad \left[\frac{1}{n! \gamma} \sum_{1\leq k\leq n}
(-1)^k k!\mathbb B_{n,k}\left(-1!\frac{\si_1(\zeta)}{\gamma},-2!\frac{\wsi_2(\zeta)}{\gamma},\ldots,
-(n-k+1)!\frac{\wsi_{n-k+1}(\zeta)}{\gamma}
\right)\right] \cr
=&-\int_{\mathbb R^4} d\boldsymbol\rho \int _0^{\alpha^2}d\tau(\alpha^2-\tau) 
\frac{\partial^2}{\partial \tau^2} \left[
\sum_{1\leq k \leq n}\frac{k!}{\gamma^{k+1}}\sum_{\mathbf c(k,n) } \prod_{j=1}^{n-k+1} 
\left(\frac{\wsi_j(\zeta)^{c_j}}{c_j!}\right)\right].
\end{align}
Here $\zeta=(\sqrt{\tau},\boldsymbol\rho)$ and we have made use of
\eqref{bell_simpl} with the nowhere-vanishing $w=-\gamma^{-1}$.  To obtain expressions for higher-order
solutions we use the explicit form of the Bell polynomials
\begin{align*}
  \mathbb B_{1,1}(x_1) &=x_1,  &   \mathbb{B}_{2,1}(x_1,x_2) &= x_2, & \mathbb{B}_{3,1}(x_1,x_2,x_3) &=x_3,    &  \mathbb{B}_{4,1}(x_1,x_2,x_3,x_4)&=x_4, \\
   &&   \mathbb{B}_{2,2}(x_1,x_2)&=x_1^2 ,
   &\mathbb{B}_{3,2}(x_1,x_2,x_3) &=3 x_1 x_2,  &  \mathbb{B}_{4,2}(x_1,x_2,x_3,x_4)&= 4x_1x_3+3x_2^2,\\
      &&   & 
   &\mathbb{B}_{3,3}(x_1,x_2,x_3) &= x_1^3,  &  \mathbb{B}_{4,3}(x_1,x_2,x_3,x_4)&= 6x_1^2x_2, \\
         &&   & 
   & &  &  \mathbb{B}_{4,4}(x_1,x_2,x_3,x_4)&= x_1^4. \\
   \end{align*}
The first order in perturbation theory can be given exactly ---and without using the Pad\'e approximants, nor
regularization --- and is given by 
\begin{align*}
 \si_1(\alpha)&=-2\mathrm{Vol}(\mathbb S^3)\int_0^{\alpha^2}\, d\tau\,(\alpha^2-\tau)\int_{0}^\infty d\rho \frac{\rho^2}{(1+\tau+\rho^3)^2} \\
         &=-2(2\pi^2)\int_0^{\alpha^2}\,d\tau\,\frac{\alpha^2-\tau}{4(1+\tau)} 
         =-\pi^2[(\alpha^2+1)\log(\alpha^2+1)-\alpha^2].
\end{align*}
With \eqref{sattt} in our hands, other low-order terms can be obtained:
\begin{align*}
 \wsi_0(\alpha)&=\si_0(\alpha)=  0 \\
 \wsi_1(\alpha)&= \si_1(\alpha)= -\pi^2[(\alpha^2+1)\log(\alpha^2+1)-\alpha^2]\\
 \wsi_2(\alpha)&=  -\int_{\mathbb R^4_\Lambda} d\boldsymbol\rho \int _0^{\alpha^2}d\tau(\alpha^2-\tau) 
\frac{\partial^2}{\partial \tau^2} \left(\frac{1}{\gamma^{2}} \wsi_1(\zeta)\right)\\
 \wsi_3(\alpha)&=  -\int_{\mathbb R^4_\Lambda} d\boldsymbol\rho \int _0^{\alpha^2}d\tau(\alpha^2-\tau) 
\frac{\partial^2}{\partial \tau^2} 
\left\{\frac{1}{\gamma^{3}}\left(\wsi_1^2(\zeta)+\gamma \wsi_2(\zeta)\right)\right\}\\
 \wsi_4(\alpha)&=  -\int_{\mathbb R^4_\Lambda} d\boldsymbol\rho \int _0^{\alpha^2}d\tau(\alpha^2-\tau) 
\frac{\partial^2}{\partial \tau^2} \left\{\frac{1}{\gamma^{4}}\left(\wsi_1^3(\zeta)
+2\gamma\wsi_1(\zeta) \wsi_2(\zeta)+\gamma^2\wsi_3(\zeta)\right)\right\}\\
 \wsi_5(\alpha)&=  -\int_{\mathbb R^4_\Lambda} d\boldsymbol\rho \int _0^{\alpha^2}d\tau(\alpha^2-\tau) 
\frac{\partial^2}{\partial \tau^2} \left\{\frac{1}{\gamma^{5}}\Big(\wsi_1^4(\zeta)
+3\gamma\wsi_1(\zeta)^2 \wsi_2(\zeta)\right. \\
&\qquad \qquad\qquad \qquad\qquad\qquad\quad 
\left.\big.\phantom{\frac{1}{\gamma}} + 2\gamma^2(\wsi_1(\zeta)\wsi_3(\zeta)+\wsi_2^2(\zeta))
+\gamma^3\wsi_4(\zeta)\Big)\right\}.
\end{align*}
In all these expressions $\wsi_i(\zeta)=\sum_{j=1}^4\wsi_i(p_j)+\wsi_i(\sqrt \tau)$, with 
$\zeta_0=\sqrt{\tau},\zeta_1=\rho_1,\ldots,\zeta_4=\rho_4$. Notice
that the non-linearity is evident from the third order on. \par
To shed some light on the procedure to extract the divergence 
occurring in the integral \eqref{sattt}, we consider the 
second order and then extend the method to higher orders.
The most dangerous term in 
\bea\label{express}
 \wsi_2(\alpha)
&=&-\int_{\mathbb R^4_\Lambda} d\boldsymbol\rho \int _0^{\alpha^2}d\tau(\alpha^2-\tau)\Big[\frac{6\wsi_1(\zeta)}{\gamma^4}-
\frac{4\wsi'_1(\sqrt{\tau})}{\gamma^3}+\frac{\wsi''_1(\sqrt{\tau})}{\gamma^2}\Big] 
\eea
is the last summand.
We write the Taylor expansion of $\gamma^{-2}\wsi''_1(\sqrt{\tau})$ 
at first order and get the renormalized expression $\si_2(\alpha)$ as 
\bea
\si_2(\alpha)&=&  -\int_{\mathbb R^4_\Lambda} d\boldsymbol\rho \int _0^{\alpha^2}d\tau(\alpha^2-\tau)\Big[\frac{6\si_1(\zeta)}{\gamma^4}-\frac{4\si'_1(\sqrt{\tau})}{\gamma^3}+\frac{\si''_1(\sqrt{\tau})}{\gamma^2}-\frac{\si''_1(\sqrt{\tau})}{(1+|{\boldsymbol \rho}|^2)^2}\Big] \cr
&=&-\int_{\mathbb R^4_\Lambda} d\boldsymbol\rho \int _0^{\alpha^2}d\tau(\alpha^2-\tau)\Big[\frac{6\si_1(\zeta)}{\gamma^4}-\frac{4\si'_1(\sqrt{\tau})}{\gamma^3}\Big]\cr
&&+\pi^2\int _0^{\alpha^2}d\tau(\alpha^2-\tau) \si''_1(\sqrt{\tau})\log(1+\tau) \nonumber\\
&=&-\int_{\mathbb R^4_\Lambda} d\boldsymbol\rho \int _0^{\alpha^2}d\tau(\alpha^2-\tau)\Big[\frac{6\si_1(\zeta)}{\gamma^4}-\frac{4\si'_1(\sqrt{\tau})}{\gamma^3}\Big] \nonumber\cr
&&+\pi^4\left\{(1+\alpha^2)\log(1+\alpha^2)-\alpha^2-\frac12 (1+\alpha^2)\left[\log(1+\alpha^2)\right]^2 \right\}.
\eea
The above integral is convergent and therefore $\si_2(\alpha)$ 
is well defined in the limit where $\Lambda\rightarrow \infty$.
Consider now
\bea
\wsi_{n+1}(\alpha)&=&-\int_{\mathbb R^4} d\boldsymbol\rho \int _0^{\alpha^2}d\tau(\alpha^2-\tau) 
\frac{\partial^2}{\partial \tau^2} \left[
\sum_{1\leq k \leq n}\frac{k!}{\gamma^{k+1}}\sum_{\mathbf c(k,n) } \prod_{j=1}^{n-k+1} 
\left(\frac{\wsi_j(\zeta)^{c_j}}{c_j!}\right)\right].
\eea
The integral leads to the logarithmically divergence which could be removed. We get
\bea
\si_{n+1}(\alpha)&=&-\int_{\mathbb R^4} d\boldsymbol\rho \int _0^{\alpha^2}d\tau(\alpha^2-\tau) \Bigg\{
\frac{\partial^2}{\partial \tau^2} \Big[
\sum_{1\leq k \leq n}\frac{k!}{\gamma^{k+1}}\sum_{\mathbf c(k,n) } \prod_{j=1}^{n-k+1} 
\Big(\frac{\si_j(\zeta)^{c_j}}{c_j!}\Big)\Big]\cr
&&\qquad\qquad \qquad \qquad \qquad\quad- \frac{\si''_n(\sqrt{\tau})}{(1+|{\boldsymbol \rho}|^2)^2}\Bigg\}.
\eea
The above integral is convergent in the limit where 
$\Lambda\rightarrow \infty$ using (almost) equal degree Pad\'e 
approximation. The solution of the integral equation, for small values 
of the coupling constant, is given in  \cref{bbb}, \cref{lcritical} and \cref{aaa}. 
Those plots show $\sigma(\alpha)$, computed to second order in 
$\wl$. We have used \textit{Mathematica}$^{TM}$ to obtain the 
Pad\'e approximants and to plot the solution. Their advantage
over partial Taylor sums to approximate the $\si_i$'s becomes
now clear--- those had been otherwise divergent and the only
term we introduced in order to control the divergence would not 
have been enough.

%

\begin{figure}[!htp]
  \centering
\includegraphics[scale=.87]
{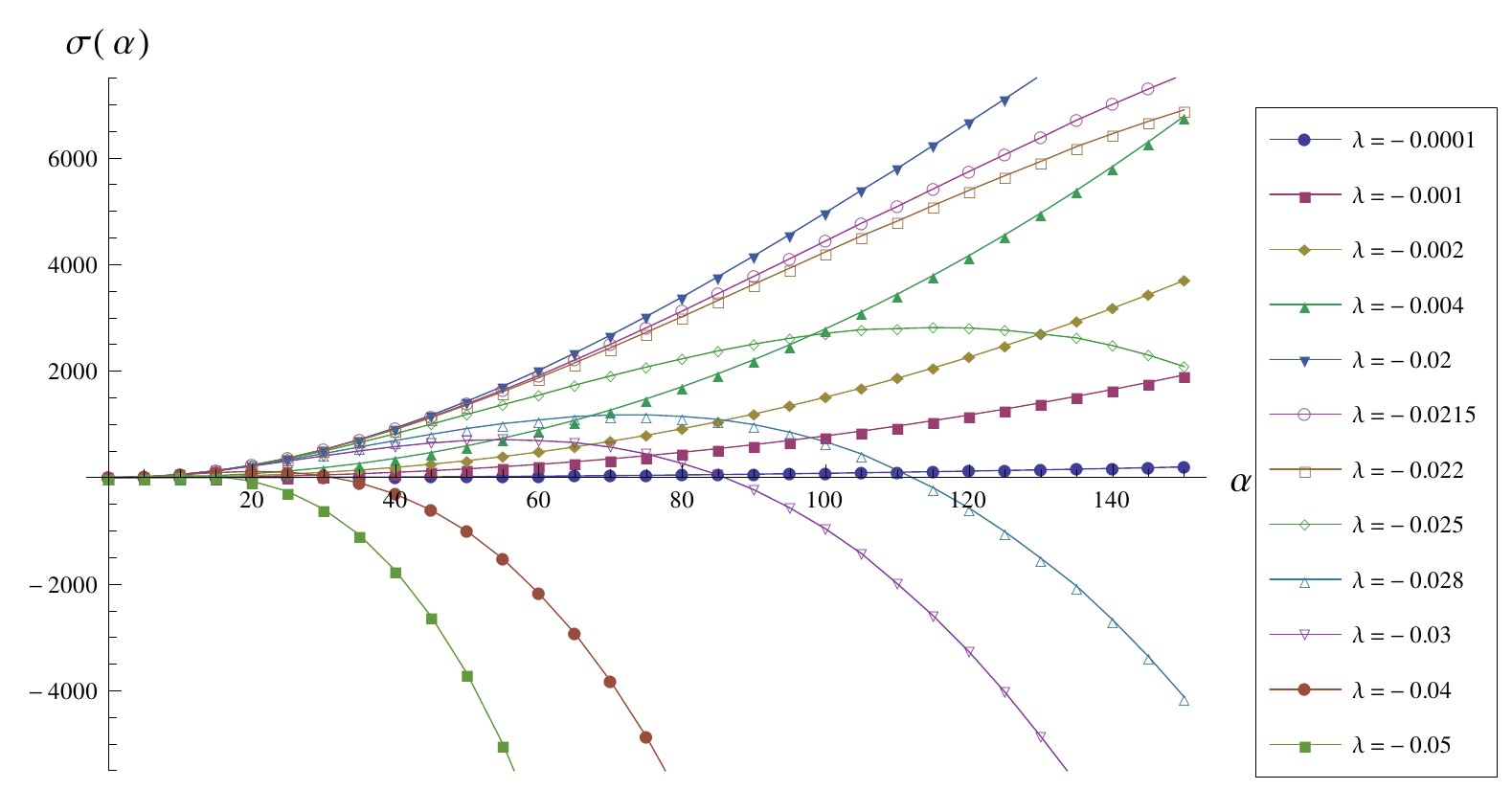}  \caption{\small 
Plot of $\si(\alpha)$ with different negative values of $\lambda$. 
 The curves are interpolations of discrete data obtained for 
 the two-point function of the $\varphi^4_5$-model (with $m_r$ set to $1$) 
 to second order in $\wl$.} \label{bbb} 
\end{figure}

\begin{figure}[!htp]
  \centering
\includegraphics[scale=1.15]
{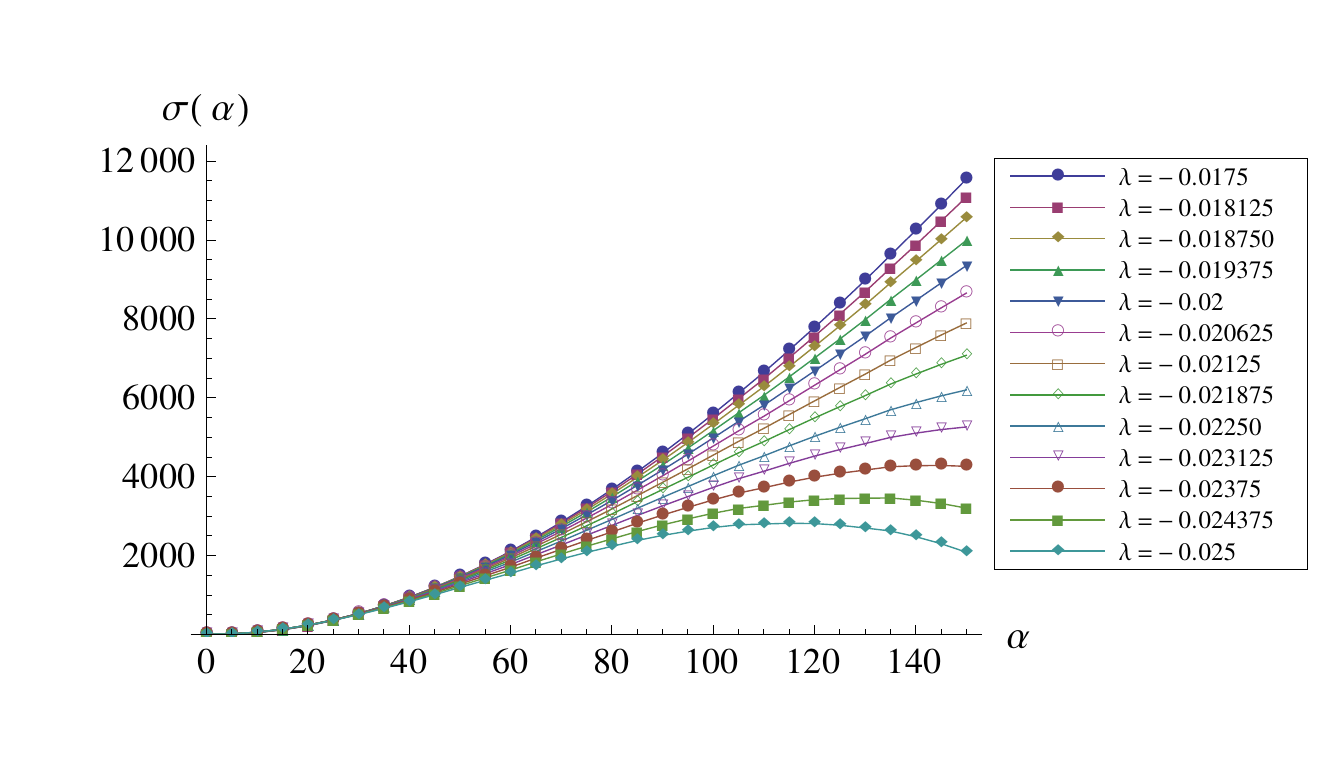}  \caption{{This is a zoom
to the region where criticality might take place. It shows 
  how the behaviour of the two-point function bifurcates
  from a certain value for the coupling constant about $\lambda \approx -0.002125 $. }
  \label{lcritical} }
\end{figure}

\begin{figure}[!htp]
  \centering
\includegraphics
{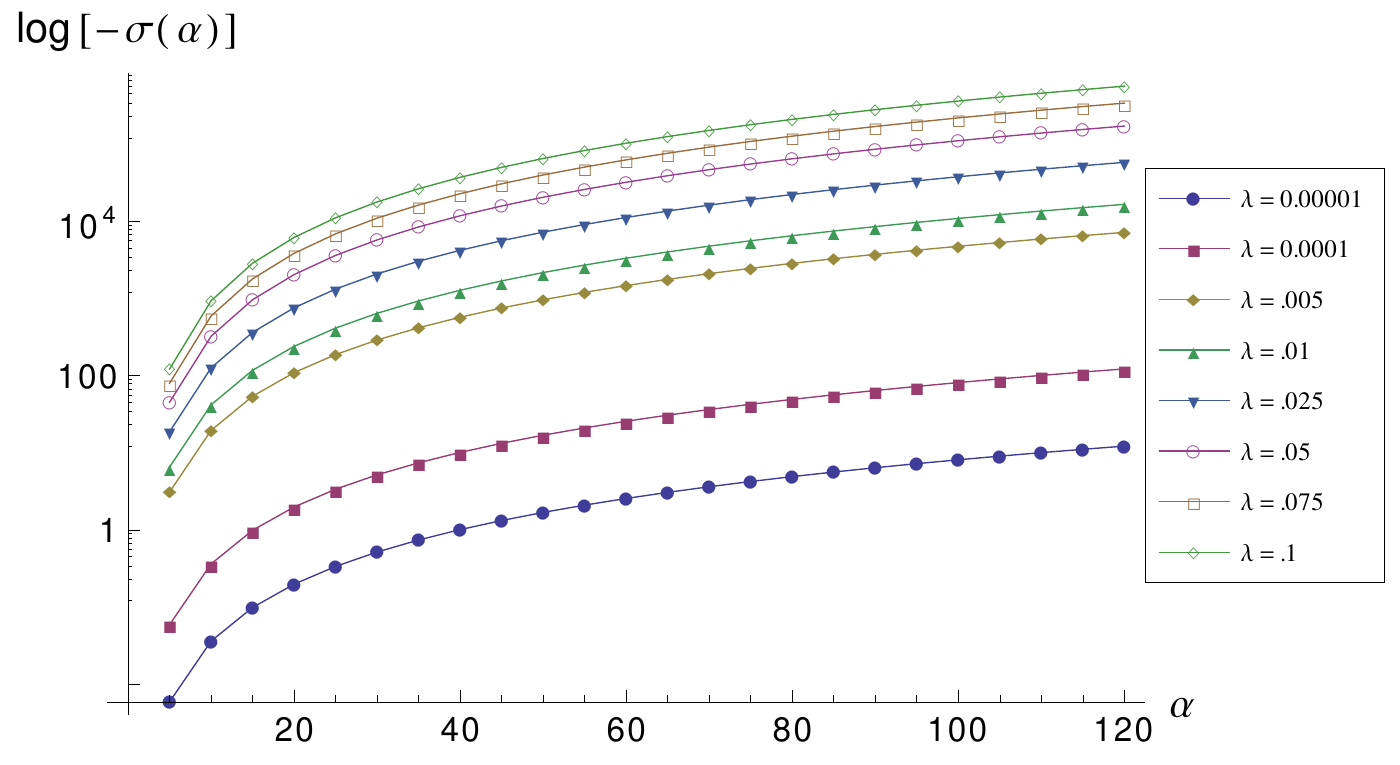}  \caption{\small 
  Plot of $\log[-\si(\alpha)]$ with different  positive values of  $\lambda$. Just as 
  in the previous plot, we interpolated a discrete graph.}
  \label{aaa}
\end{figure}


\section{Conclusion}\label{sec7}
In this paper we have considered the just renormalizable $\vp_5^4$ 
tensorial group field theory with the propagator of the form $1/{\bf p}^2$. 
We have introduced the melonic approximation of the Schwinger-Dyson equation of 
the two and four-point functions.
 This made possible, by suppressing the non-melonic graphs, to 
obtain a closed equation 
for the two-point functions. This equation is solved perturbatively. 
 It would be interesting to apply the 
melonic approximation to other tensor models
supporting a large-$N$ expansion, e.g. to multi-orientable tensor models \cite{Dartois:2014aaa}.

For future investigation remains the numerical study of the four-point function
we treated in section 5. We also plan to address the criticality of the model. 
Concretely, at certain value of the coupling constant, 
namely about $\lambda \approx -2.125\,\times 10^{-2}$, the behaviour of the two-point function
noticeably bifurcates. Thus, some criticality is promissory in 
\cref{lcritical}. To claim this we need a new, more detailed study, though; 
for instance, by solving for higher values of $\alpha$. 
The phase transitions
and the critical behaviour of the model 
could physically relevant, and in particular, interesting for applications in cosmology.
 
\section*{Acknowledgements}
The authors  thank Vincent Rivasseau for fruitful discussion.
D. Ousmane Samary's research was supported in part by Perimeter
Institute for Theoretical Physics and Fields Institute for Research in
Mathematical Sciences (Toronto).  F. Vignes-Tourneret and
R. Wulkenhaar thank the Perimeter Institute for invitation and
hospitality. Research at Perimeter Institute is
supported by the Government of Canada through Industry Canada and by
the Province of Ontario through the Ministry of Research and
Innovation. C. I. P\'erez-S\'anchez wishes to thank the DAAD (Deutscher 
Akademischer Austauschdienst), for financially supporting his 
PhD studies hitherto. We thank the referee for useful remarks,
particularly for pointing out criticality.


\begin{thebibliography}{99}

\bibitem{Rivasseau:2014ima} 
  V.~Rivasseau,
  ``The Tensor Theory Space,''
  arXiv:1407.0284 [hep-th].


\bibitem{Rivasseau:2013uca} 
  V.~Rivasseau,
  ``The Tensor Track, III,''
  Fortsch.\ Phys.\  {\bf 62}, 81 (2014)
  [arXiv:1311.1461 [hep-th]].

\bibitem{Rivasseau:2012yp} 
  V.~Rivasseau,
  ``The Tensor Track: an Update,''
  arXiv:1209.5284 [hep-th].

\bibitem{Rivasseau:2011hm} 
  V.~Rivasseau,
  ``Quantum Gravity and Renormalization: The Tensor Track,''
  AIP Conf.\ Proc.\  {\bf 1444}, 18 (2011)
  [arXiv:1112.5104 [hep-th]].

\bibitem{Oriti:2009wn} 
  D.~Oriti,
  ``The Group field theory approach to quantum gravity: Some recent results,''
  arXiv:0912.2441 [hep-th].



\bibitem{Gurau:2011xq} 
  R.~Gurau,
  ``The complete 1/N expansion of colored tensor models in arbitrary dimension,''
  Annales Henri Poincare {\bf 13}, 399 (2012)
  [arXiv:1102.5759 [gr-qc]].

\bibitem{Gurau:2011aq} 
  R.~Gurau and V.~Rivasseau,
  ``The 1/N expansion of colored tensor models in arbitrary dimension,''
  Europhys.\ Lett.\  {\bf 95}, 50004 (2011)
  [arXiv:1101.4182 [gr-qc]].

\bibitem{Gurau:2010ba} 
  R.~Gurau,
  ``The 1/N expansion of colored tensor models,''
  Annales Henri Poincare {\bf 12}, 829 (2011)
  [arXiv:1011.2726 [gr-qc]].

\bibitem{Gurau:2011aq} 
  R.~Gurau and V.~Rivasseau,
  ``The 1/N expansion of colored tensor models in arbitrary dimension,''
  Europhys.\ Lett.\  {\bf 95}, 50004 (2011)
  [arXiv:1101.4182 [gr-qc]].

\bibitem{Gurau:2011xp} 
  R.~Gurau and J.~P.~Ryan,
  ``Colored Tensor Models - a review,''
  SIGMA {\bf 8}, 020 (2012)
  [arXiv:1109.4812 [hep-th]].

\bibitem{Bonzom:2012hw} 
  V.~Bonzom, R.~Gurau and V.~Rivasseau,
  ``Random tensor models in the large N limit: Uncoloring the colored tensor models,''
  Phys.\ Rev.\ D {\bf 85}, 084037 (2012)
  [arXiv:1202.3637 [hep-th]].

\bibitem{Gurau:2013pca} 
  R.~Gurau,
  ``The 1/N Expansion of Tensor Models Beyond Perturbation Theory,''
  Commun.\ Math.\ Phys.\  {\bf 330}, 973 (2014)
  [arXiv:1304.2666 [math-ph]].





\bibitem{Di Francesco:1993nw}
  P.~Di Francesco, P.~H.~Ginsparg and J.~Zinn-Justin,
 ``2-D Gravity and random matrices,''
  Phys.\ Rept.\  {\bf 254}, 1 (1995)
  [arXiv:hep-th/9306153].

\bibitem{'tHooft:1982tz} 
  G.~'t Hooft,
  ``On the Convergence of Planar Diagram Expansions,''
  Commun.\ Math.\ Phys.\  {\bf 86}, 449 (1982).



\bibitem{Gurau:2012ix} 
  R.~Gurau,
  ``The Schwinger Dyson equations and the algebra of constraints of random tensor models at all orders,''
  Nucl.\ Phys.\ B {\bf 865}, 133 (2012)
  [arXiv:1203.4965 [hep-th]].

\bibitem{Dartois:2014aaa} 
S.~Dartois, V.~Rivasseau and A.~Tanasa
`` The 1/N expansion of multi-orientable random tensor models,''
  Annales Henri Poincare {\bf 15}, 965 (2014)
  [arXiv:1301.1535 [hep-th]].

  
\bibitem{Bonzom:2011zz} 
  V.~Bonzom, R.~Gurau, A.~Riello and V.~Rivasseau,
  ``Critical behavior of colored tensor models in the large N limit,''
  Nucl.\ Phys.\ B {\bf 853}, 174 (2011)
  [arXiv:1105.3122 [hep-th]].

\bibitem{Dartois:2013sra} 
  S.~Dartois, R.~Gurau and V.~Rivasseau,
  ``Double Scaling in Tensor Models with a Quartic Interaction,''
  JHEP {\bf 1309}, 088 (2013)
  [arXiv:1307.5281 [hep-th]].

\bibitem{Bonzom:2014oua} 
  V.~Bonzom, R.~Gurau, J.~P.~Ryan and A.~Tanasa,
  ``The double scaling limit of random tensor models,''
  arXiv:1404.7517 [hep-th].


\bibitem{Gurau:2011sk} 
  R.~Gurau,
  ``The Double Scaling Limit in Arbitrary Dimensions: A Toy Model,''
  Phys.\ Rev.\ D {\bf 84}, 124051 (2011)
  [arXiv:1110.2460 [hep-th]].


\bibitem{Kaminski:2013maa} 
  W.~Kaminski, D.~Oriti and J.~P.~Ryan,
  ``Towards a double-scaling limit for tensor models: probing sub-dominant orders,''
  New J.\ Phys.\  {\bf 16}, 063048 (2014)
  [arXiv:1304.6934 [hep-th]].

\bibitem{Bonzom:2011ev} 
  V.~Bonzom, R.~Gurau and V.~Rivasseau,
  ``The Ising Model on Random Lattices in Arbitrary Dimensions,''
  Phys.\ Lett.\ B {\bf 711}, 88 (2012)
  [arXiv:1108.6269 [hep-th]].

\bibitem{Benedetti:2011nn} 
  D.~Benedetti and R.~Gurau,
  ``Phase Transition in Dually Weighted Colored Tensor Models,''
  Nucl.\ Phys.\ B {\bf 855}, 420 (2012)
  [arXiv:1108.5389 [hep-th]].


\bibitem{Baratin:2013rja} 
  A.~Baratin, S.~Carrozza, D.~Oriti, J.~Ryan and M.~Smerlak,
  ``Melonic phase transition in group field theory,''
  Lett.\ Math.\ Phys.\  {\bf 104}, 1003 (2014)
  [arXiv:1307.5026 [hep-th]].


\bibitem{Gurau:2011tj} 
  R.~Gurau,
  ``A generalization of the Virasoro algebra to arbitrary dimensions,''
  Nucl.\ Phys.\ B {\bf 852}, 592 (2011)
  [arXiv:1105.6072 [hep-th]].

\bibitem{Geloun:2011cy} 
  J.~Ben Geloun and V.~Bonzom,
  ``Radiative corrections in the Boulatov-Ooguri tensor model: The 2-point function,''
  Int.\ J.\ Theor.\ Phys.\  {\bf 50}, 2819 (2011)
  [arXiv:1101.4294 [hep-th]].

\bibitem{BenGeloun:2011rc} 
  J.~Ben Geloun and V.~Rivasseau,
  ``A Renormalizable 4-Dimensional Tensor Field Theory,''
  Commun.\ Math.\ Phys.\  {\bf 318}, 69 (2013)
  [arXiv:1111.4997 [hep-th]].

\bibitem{Geloun:2012fq} 
  J.~Ben Geloun and V.~Rivasseau,
  ``Addendum to 'A Renormalizable 4-Dimensional Tensor Field Theory',''
  Commun.\ Math.\ Phys.\  {\bf 322}, 957 (2013)
  [arXiv:1209.4606 [hep-th]].
  

\bibitem{Geloun:2013saa} 
  J.~Ben Geloun,
  ``Renormalizable Models in Rank $d\geq 2$ Tensorial Group Field Theory,''
  arXiv:1306.1201 [hep-th].

\bibitem{BenGeloun:2012pu} 
  J.~Ben Geloun and D.~Ousmane.~Samary,
  ``3D Tensor Field Theory: Renormalization and One-loop $\beta$-functions,''
  Annales Henri Poincare {\bf 14}, 1599 (2013)
  [arXiv:1201.0176 [hep-th]].

\bibitem{Geloun:2014ema} 
  J.~Ben Geloun and R.~Toriumi,
  ``Parametric Representation of Rank d Tensorial Group Field Theory: Abelian Models with Kinetic Term $\sum_{s}|p_s| + \mu$,''
  arXiv:1409.0398 [hep-th].

\bibitem{Carrozza:2012uv} 
  S.~Carrozza, D.~Oriti and V.~Rivasseau,
  ``Renormalization of Tensorial Group Field Theories: Abelian U(1) Models in Four Dimensions,''
  Commun.\ Math.\ Phys.\  {\bf 327}, 603 (2014)
  [arXiv:1207.6734 [hep-th]].




\bibitem{Geloun:2012bz} 
  J.~Ben Geloun and E.~R.~Livine,
  ``Some classes of renormalizable tensor models,''
  J.\ Math.\ Phys.\  {\bf 54}, 082303 (2013)
  [arXiv:1207.0416 [hep-th]].

\bibitem{Samary:2012bw} 
  D.~Ousmane.~Samary and F.~Vignes-Tourneret,
  ``Just Renormalizable TGFT's on $U(1)^{d}$ with Gauge Invariance,''
  Commun.  Math. Phys. DOI:10.1007/s00220-014-1930-3 (2014)
  [arXiv:1211.2618 [hep-th]].


\bibitem{Carrozza:2013wda} 
  S.~Carrozza, D.~Oriti and V.~Rivasseau,
  ``Renormalization of an SU(2) Tensorial Group Field Theory in Three Dimensions,''
  Commun. Math. Phys. DOI:10.1007/s00220-014-1928-x (2014)
  [arXiv:1303.6772 [hep-th]].


\bibitem{BenGeloun:2012yk} 
  J.~Ben Geloun,
  ``Two and four-loop $\beta$-functions of rank 4 renormalizable tensor field theories,''
  Class.\ Quant.\ Grav.\  {\bf 29}, 235011 (2012)
  [arXiv:1205.5513 [hep-th]].

\bibitem{Samary:2013xla} 
  D.~Ousmane Samary,
  ``Beta functions of  $U(1)^d$ gauge invariant just renormalizable tensor models,''
  Phys.\ Rev.\ D {\bf 88}, no. 10, 105003 (2013)
  [arXiv:1303.7256 [hep-th]].


\bibitem{Carrozza:2014rba} 
  S.~Carrozza,
  ``Discrete Renormalization Group for SU(2) Tensorial Group Field Theory,''
  arXiv:1407.4615 [hep-th].




\bibitem{Grosse:2004yu} 
  H.~Grosse and R.~Wulkenhaar,
  ``Renormalization of $\phi^4$ theory on noncommutative $\mathbb{R}^4$ in the matrix base,''
  Commun.\ Math.\ Phys.\  {\bf 256}, 305 (2005)
  [hep-th/0401128].


\bibitem{Grosse:2003nw} 
  H.~Grosse and R.~Wulkenhaar,
  ``Renormalization of $\phi^4$ theory on noncommutative $\mathbb{R}^2$ in the matrix base,''
  JHEP {\bf 0312}, 019 (2003)
  [hep-th/0307017].



\bibitem{Rivasseau:2005bh} 
  V.~Rivasseau, F.~Vignes-Tourneret and R.~Wulkenhaar,
  ``Renormalization of noncommutative $\phi^4$-theory by multi-scale analysis,''
  Commun.\ Math.\ Phys.\  {\bf 262}, 565 (2006)
  [hep-th/0501036].






\bibitem{Grosse:2004by} 
  H.~Grosse and R.~Wulkenhaar,
  ``The beta function in duality covariant noncommutative $\phi^4$ theory,''
  Eur.\ Phys.\ J.\ C {\bf 35}, 277 (2004)
  [hep-th/0402093].



\bibitem{Disertori:2006nq} 
  M.~Disertori, R.~Gurau, J.~Magnen and V.~Rivasseau,
  ``Vanishing of Beta Function of Non Commutative $\Phi^4_4$ Theory to all orders,''
  Phys.\ Lett.\ B {\bf 649}, 95 (2007)
  [hep-th/0612251].

\bibitem{Grosse:2009pa} 
  H.~Grosse and R.~Wulkenhaar,
  ``Progress in solving a noncommutative quantum field theory in four dimensions,''
  arXiv:0909.1389 [hep-th].



\bibitem{Grosse:2012uv} 
  H.~Grosse and R.~Wulkenhaar,
  ``Self-dual noncommutative $\phi^4$-theory in four dimensions is a non-perturbatively solvable and non-trivial quantum field theory,'' Commun. Math. Phys. {\bf 329}  p 1069-1130
  arXiv:1205.0465 [math-ph].


\bibitem{Grosse:2014lxa} 
  H.~Grosse and R.~Wulkenhaar,
  ``Solvable 4D noncommutative QFT: phase transitions and quest for reflection positivity,''
  arXiv:1406.7755 [hep-th].

  
\bibitem{comtet:1974a} 
  L.~Comtet,
  ``Advanced Combinatorics''
  D. Reidel Publishing Company.


\bibitem{Samary:2014tja} 
D.~Ousmane.~Samary
  ``Closed equations of the two-point functions for tensorial 
group field theory,'' Class.\ Quant.\ Grav.\  {\bf 31}, 185005 (2014)
  arXiv:1401.2096 [hep-th].

\bibitem{Riv1}
V.~Rivasseau, 
\newblock \emph{From Perturbative to Constructive Renormalization}.
\newblock Princeton series in physics. Princeton Univ. Pr., 1991.
\newblock 336 p.




\end{thebibliography}
\end{document}